\documentclass[12pt]{article}
\usepackage{amscd,amsfonts,amsmath,amssymb,delarray,graphics,latexsym,lipsum,rotating,subfigure}
\usepackage{geometry}
\usepackage[usenames]{color}
\geometry{hmargin=90pt, vmargin=90pt}
%\pagestyle{empty}
%\makeatletter
%\newcommand{\BLUE}[1]{{\color{Blue}#1}}
\newtheorem{thm}{Theorem}[section]

\newtheorem{lem}{Lemma}[section]

\newenvironment{proof}[1][Proof]{\textbf{#1:} }{\ \rule{0.5em}{0.5em}}

\newcommand{\EP}{\mathbb{EP}}
\numberwithin{equation}{section}

\title{On Fair Reinsurance Premiums; Capital Injections in a Perturbed Risk Model}

\author{Zied Ben Salah\footnote{Corresponding author: Department of Mathematics and Actuarial Sciences. American University in Cairo, P.O. Box 74, New Cairo 11835, Egypt. Email: zied.bensalah@aucegypt.edu. 
This author gratefully acknowledges the financial support of a B3 Postdoctoral Fellowship from the Fonds de recherche du Qu\'ebec - Nature et technologies (FRQNT).} 
%\footnote{\textcolor{blue}{This author acknowledges financial support from the Tunisian government via the Ph.D. fellowship.}}\\
%{\footnotesize Concordia University}%\\
%\\
\quad and \quad
\setcounter{footnote}{6}
Jos\'{e} Garrido\footnote{Department of Mathematics and Statistics. Concordia University, 1455 de Maisonneuve Blvd W, Montreal, Quebec H3G 1M8, Canada. Email: jose.garrido@concordia.ca. This author gratefully acknowledges the financial support of the Natural Sciences and Engineering Research Council (NSERC) of Canada grant 36860-2017.} 
%{\footnotesize Concordia University.}
}

\date{}
 
% 
 %\date{{\scriptsize First draft: Janvier~13, 2016. This version: \today Feb.5, 2018.}}

%-------------
\usepackage{graphicx}
\graphicspath{%
    {converted_graphics/}% inserted by PCTeX
    {/}% inserted by PCTeX
}
\begin{document}

\maketitle

\begin{abstract}
%{\scriptsize 

We consider a risk model where deficits after ruin are covered by a new type of reinsurance contract that provides capital injections. To allow the insurance company's survival after ruin, the reinsurer injects capital only at ruin times caused by jumps larger than a chosen retention level. Otherwise capital must be raised from the shareholders for small deficits. The problem here is to determine adequate reinsurance premiums. It seems fair to base the net reinsurance premium on the discounted expected value of any future capital injections. 
Inspired by the results of Huzak \emph{et al.}~(2004) and Ben Salah (2014) on successive ruin events, we show that an explicit formula for these reinsurance premiums exists in a setting where aggregate claims are modeled by a subordinator and a Brownian perturbation. Here ruin events are due either to Brownian oscillations or jumps and reinsurance capital injections only apply in the latter case. The results are illustrated explicitly for two specific risk models and in some numerical examples.
\text{}\\\\
\emph{Keywords}: reinsurance, capital injections, ruin, successive ruin events, spectrally negative L\'{e}vy process, scale function, expected present value, Gerber-Shiu function.\\
%\emph{MSC Classification}: 91B24, 91B70,60J28,60K37.
%}
\end{abstract}

\section{Introduction}

Reinsurance contracts between a direct insurer and a reinsurer are used to transfer part of the risks assumed by the insurer. The problematic risks are those carrying either the possible occurrence of very large individual losses, the possible accumulation of many losses from non-independent risks, or those from other occurrences that could prevent insurers from fulfilling their solvency requirements. So traditionally reinsurance has been an integral part of insurance risk management strategies (see Centeno and Sim\~oes, 2009 for a survey of the different types of reinsurance and recent optimal reinsurance results). However, over time, global financial markets have developed additional or alternative risk transfer mechanisms, such as swaps, catastrophe bonds or other derivative products, that have helped insurers reduce their risk mitigation costs.

In this spirit of designing possibly cheaper risk transfer agreements we consider here a new type of reinsurance contract that would provide capital injections only in extreme, worse scenario cases. It differs from excess--of--loss (XL) agreements, or even catastrophe XL (Cat XL), in that it is neither a per--risk nor a per--event reinsurance contract, but rather one based on the insurer's financial position. Here ruin will serve as a simplifying proxy for the insurer's financial health. Reinsurance capital injections, after ruin, would allow the insurance company to continue operate until the next ruin. Again to simplify the analysis we adopt an on-going concern basis and set an infinite horizon for the reinsurance treaty, which can allow repeated ruin events. The reinsurance agreement then calls for a capital injection after these successive ruin events, keeping the insurer afloat in perpetuity. We call this new type of agreement {\it reinsurance by capital injections} (RCI).       

Here our jump--diffusion surplus process can generate two types of ruin events, hence different 
covers are assumed with distinct sources of capital. Surplus fluctuations due to jumps 
are assumed to represent larger claim costs from events unfavorable to the insurer; a ruin 
caused by such jumps will trigger a capital injection from the extreme-loss 
reinsurance contract, at ruin time, if the capital injection is larger than a certain threshold 
(retention limit). By contrast, Brownian oscillations represent comparatively smaller surplus 
fluctuations; so ruin caused by oscillations should be easier to cover with capital raised 
directly from the stockholders. Hence the reinsurer does not provide capital injections in cases 
when (1) ruin is from an oscillation, or (2) when it is from a jump producing a capital injection 
smaller than the threshold. As explained in the paper, even if stockholders may need cover these 
2 types of ruin costs at first, they may ultimately get reimbursed by the reinsurer, at a subsequent ruin 
time due to a jump, if the latter is deep enough to meet the threshold. 

Two recent developments in the literature make the analysis of the RCI contracts now possible, in the sense of getting tractable formulas for net premiums that would be fair to both parties for such agreements. The first one is the development of actuarial and financial models for capital injections (see for instance Einsenberg and Schmidli, 2011, or more recently Avram and Loke, 2018, and the references therein) and the other is the derivation of tractable formulas for the expected present value of future capital injections in a quite general class of risk models (see Huzak \emph{et al.}, 2004, and Ben Salah, 2014). The application presented here builds on this recent theory to develop fair lump sum net premiums for two types of RCI contracts, both over an infinite horizon. In practice our net premiums would have to be allocated to finite policy terms (e.g.~a year) and loaded appropriately to define gross (market) premiums. In this first study we focus on the definition of the RCI contracts and the derivation of the premium formulas so that both, insurance and reinsurance companies, can compare the cost of RCI contracts to their alternative risk mitigation strategies/products. Future work would then need to address the issue of optimizing the insurance firm value by weighing these premiums in relation to other concurrent capital injections from shareholders.  

To sum up, the paper is organized as follows: the general risk model used here is defined in Section \ref{riskEDPF}. Then Section \ref{preliminary result} covers the preliminary technical results needed to derive the expected present value of future capital injections. Section \ref{EDVCI def} gives the main result, with the derivation of fair premiums for reinsurance based on capital injections in the general risk model defined in Section \ref{riskEDPF}. These are illustrated in detail for two classical risk processes in Section \ref{classical}, which gives also numerical illustrations. The article concludes with some general remarks. 

\section{Risk model}\label{riskEDPF}

We consider a general insurance surplus model that extends the standard Cram\'er--Lundberg theory to allow for jumps and diffusion type fluctuations. Here 
	\begin{equation}
R_t := x - Y_t \;, \qquad t \geqslant 0 \,,
	\label{eq:risk}
	\end{equation} 
where $x\geqslant 0\/$ is the initial surplus and the risk process 
$Y\/$, a spectrally positive L\'evy process defined on a filtered probability space $(\Omega, \mathcal{F},(\mathcal{F}_t)_{t\geqslant 0}, \mathbb{P})\/$, is given by
	\begin{equation}
Y_t := -c\;t + S_t +\sigma B_t \;, \qquad t \geqslant 0 \,,
	\label{riskmodel3} 
	\end{equation} 
where $S=(S_t)_{ t\geq 0}\/$ is a subordinator (i.e.~a L\'evy process of bounded variation and non--decreasing paths) without a drift ($S_0=Y_0=0\/$) and $B\/$ is a standard Brownian motion independent of $S\/$. Let $\nu\/$ be the L\'evy measure of $S\/$; that is, $\nu\/$ is a $\sigma$--finite measure on $(0, \infty)\/$ satisfying $ \int_{(0,\infty)}(1\wedge y )\nu(dy) < \infty\/$. 
In this case the Laplace exponent of $S\/$ is defined by 
	$$\psi_S(s) = \int_{(0,\infty)} (e^{s\,y} -1 ) \,\nu(dy)\,,$$
where $\mathbb{E}[e^{s S_t}] = e^{ t\,\psi_S(s)}\/$. 

%We define the spectrally negative (i.e. a Lévy process with negative jumps) process $X=-Y_t)_{t\geq 0}$ as 
%\begin{equation}\label{model1}
 %X_t=c\,t - S_t + \sigma B_t \;,
%\end{equation}

Note that the risk process in \eqref{eq:risk} is similar in spirit to the original perturbed surplus process introduced in Dufresne and Gerber (1991).
%where $Z_t$  is a zero-mean spectrally negative process.
The constant $x > 0\/$ represents the initial surplus, while the process $Y\/$ models the cash outflow of the primary insurer and the subordinator $S\/$ represents aggregate claims. That is why $S\/$ needs to be an increasing process, with the jumps representing the claim amounts paid out. The Brownian motion $B\/$ accounts for any small fluctuations affecting other components of the risk process dynamics, such as the claim arrivals, premium income or investment returns. 

Here $c\,t\/$ represents aggregate premium inflow over the interval of time $[0,t]\/$.
The premium rate $c\/$ is assumed to satisfy the net profit condition, 
%that is to say a necessary and sufficient condition to insure that $Y_t$ drifts to $\infty$,
more precisely $\mathbb{E}[S_1]<c\/$, which means that
	\begin{equation}
\int_{(0,\infty)} y\,\nu(dy) < c \,.
	\label{netprofit}
	\end{equation}
Condition \eqref{netprofit} implies that the process $Y\/$ has a negative drift, in order to avoid the possibility that $R\/$ becomes negative almost surely. This condition is often expressed in terms of a safety loading applied to the net premium. 
%Indeed, it is standard to write the drift component within $Y$ as a loaded premium. 
For instance, note that we can recover the classical Cram\'er--Lundberg model if $\sigma=0\/$ and $c:=(1+\theta)\,\mathbb{E}[S_1]\/$, for $S\/$ a compound Poisson process modeling aggregate claims. %The drift $c$, with a positive safety loading $\theta>0$, is the collected premium rate. 

We do not use the concept of safety loading in this paper, in order to simplify the notation, but we stress the fact that this concept is implicitly considered within the drift of $Y\/$ when we impose condition \eqref{netprofit}. The classical compound Poisson model is a special case of this framework where $\nu(dy)=\lambda\,K(dy)\/$, with $\lambda\/$ being the Poisson arrival rate and $K\/$ a diffuse claim distribution. We refer to Asmussen and Albrecher (2010) for an account on the classical risk model, and to Dufresne and Gerber (1991), Dufresne, Gerber and Shiu (1991), Furrer and Schmidli (1994), Yang and Zhang (2001),  Biffis and Morales (2010) and Ben Salah (2014) for the original and different generalizations or studies of the model in \eqref{riskmodel3}.

%%%%%%%%%%%%%%%%%%%%%%%%%%%%%%%%%%%%%%%%%%%%%%%%%%%%%%%%%%%%%%%%%%%%%%%%%%%%%%%%%%%%%%%%%%%%%%%%%%%%%%%%%%%%%%%%%
Now, one of the main objectives of this paper is to obtain an expression for the reinsurance premium for the risk model in \eqref{riskmodel3}. 
%But before we can write out such an expression 
First we need to define quantities and notation associated with the ruin time, as well as the sequence of times of successive deficits due to a claim of the surplus process \eqref{riskmodel3} after ruin. %This implies that some of notations related to these deficits at and after ruin needed to be defined . 
%%%%%%%%%%%%%%%%%%%%%%%%%%%%%%%%%%%%%%%%%%%%%%%%%%%%%%%%%%%%%%%%%%%%%%%%%%%%%%%%%%%%%%%%%%%%%%%%%%%%%%%%%%%%%%%%%
Let  $\tau_x\/$ be the \emph{ruin time} representing the first passage time of $R_t\/$ below zero when $R_0=x\/$, i.e. 
	\begin{equation}
\tau_x := \inf\{ t > 0 \, : \, Y_t > x \}\,,
	\label{def:ruin}
	\end{equation}
where we set $\tau_x=+\infty\/$ if $R_t\geq0\/$, for all $t\geq 0\/$. 
% Following the same order of ideas, we study the EDPF under the general context which gives relevant informations on and after the ruin event. More precisely, we generalize the EDPF defined in (\ref{edpfs1}) to include the quantities associated with the ruin time $\tau_x$ and  times sequence of successive minima reached by a claim of the risk process (\ref{riskmodel3}) after ruin. This implies that some of notations related to  record minima needed to be introduced .
We define the first new record time of the running supremum
	\begin{equation}
\tau := \inf \{ t>0 \, : \, Y_t>\overline{Y}_{t^-}\} \,,
	\label{T}
	\end{equation}
%%%%%%%%%%%%%%%%%%%%%%%%%%%%%%%%%%%%%%%%%%%%%%%%%%%%%%%%%%%%%%%%%%%%%%%%%%%%%%%%%%%%%%%%%%%%%%%%%%%%%%%%%%%%%%%%%%%
and the sequence of times corresponding to new records of $Y\/$ (that is $\overline{Y}_t := \sup\{Y_s \, : \, t \ge s\}\/$) due to a jump of $S\/$ after the ruin time $\tau_x\/$. More precisely, let
	\begin{equation}
\tau^{(1)} := \tau_x\,,
	\label{T1}
	\end{equation}
and, assuming that $\{ \tau^{(n)} < \infty \}$, then by induction on $n \ge 1\/$:
	\begin{align}\label{T2}
\tau^{(n+1)}:=\inf \{ t>\tau^{(n)} \; : \; Y_t>\overline{Y}_{t^-}\}\;,
	\end{align}
(note that by this definition $\tau^{(1)}\/$ differs from the consecutive new record
times $(\tau^{(n)})_{n>1}\/$; the former includes ruin events caused by jumps and Brownian
oscillations, while the latter include subsequent records only due to jumps).

Recall from Theorem 4.1 of Huzak \emph{et al.}~(2004) that the sequence $(\tau^{(n)})_{n>1}\/$ is discrete, and, in particular, neither time $0\/$ nor any other time is an accumulation point of these $\tau^{(n)}$'s. More precisely, $\tau > 0\/$ a.s.~and $\tau^{(n)}<\tau^{(n+1)}\/$ a.s.~if $\{ \tau^{(n)} <\infty \}\/$. As a consequence, we can order the sequence $(\tau^{(n)})_{n\ge1}\/$ of times when a new supremum is reached by a jump of a subordinator as $ 0<\tau^{(1)}<\tau^{(2)}<\cdots \quad$ a.s.; see Figure \ref{fig:Image1}.

Finally, consider the random number 
	\begin{equation}
N: = \max\{ n: \tau^{(n)} < \infty \}\,,
	\label{N}
	\end{equation}
which represents the number of new records reached by a claim of the surplus process in \eqref{riskmodel3}. 
% This random number 
% "Here you could give a brief explanation of what these times and random variables represent in a path of a risk process and discuss why you are interested on ....this is important because it is at this point that you are clearly explaining the objective of your paper!!!"

\begin{figure}[tbp] % float placement: (h)ere, page (t)op, page (b)ottom, other (p)age
  \centering
  % file name: C:/Users/garrido/Documents/Research/My_Papers_Book/Zied/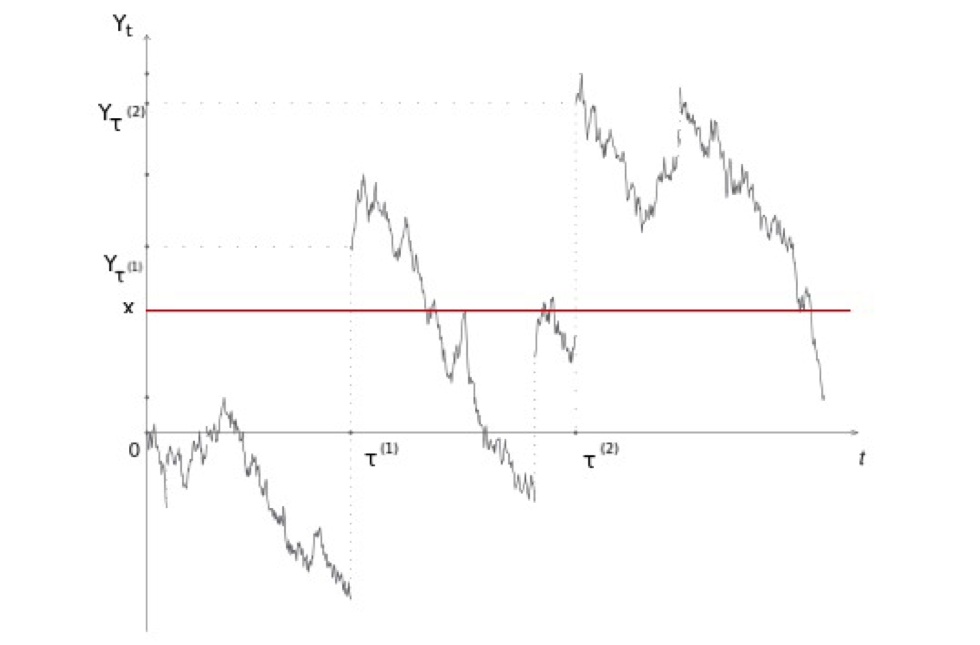
  \includegraphics[width=3.52in,height=2.32in,keepaspectratio]{Image1}
  \caption{Sample path of $Y_t=-ct+S_t+\sigma B_t$ in \eqref{riskmodel3}}
  \label{fig:Image1}
\end{figure}

Before developing fair premiums for these new reinsurance by capital injections contracts that we define here, the next section first presents the theory available for the spectrally negative L\'evy risk model defined in \eqref{riskmodel3}; see \cite{Doney Kyprianou}, \cite{Kyp} and \cite{Avram} for more details.  

\section{Preliminary results}\label{preliminary result}

This section reviews some notions and results needed in the rest of the paper. Let $X=(X_t)_{t\geqslant 0}\/$ be a spectrally negative L\'evy process defined by 
	$$X_t = -Y_t = c\,t -S_t-\sigma B_t\,.$$
Since $X\/$ has no positive jumps, the expectation $\mathbb{E}[e^{s X_t}]\/$ exists for all $s\geqslant 0\/$ and is given by $\mathbb{E}[e^{s X_t}]= e^{ t\,\psi(s)}\/$, where $\psi(s)\/$ is of the form 
	\begin{equation}
\psi(s) = c\,s + \frac{1}{2} \sigma^2 s^2 +\int_{0}^{\infty} (e^{-x\,s} -1 ) \,\nu(dx)\,.
	\label{khinchine}
	\end{equation}
Here $c\in \mathbb{R}\/$, $\sigma>0\/$ and $\nu\/$ is the L\' evy measure associated with the process $Y\/$ (for a thorough account of L\'evy processes see \cite{Bertoin, Kyp}). 

Consider $\Phi$, the right inverse of $\psi\/$, defined on $[0,\infty)\/$ by
	\begin{equation}
\Phi(q) := \sup\{s \geqslant 0 \,:\, \psi(s)=q\}\,.
	\label{inverse}
	\end{equation}
Note that since $X\/$ is a spectrally negative L\'evy process $X\/$, we have that $\Phi(q)> 0\/$ for $q > 0\/$ (see \cite{Kyp}).

It is well--known that, for every $q\geqslant 0\/$, there exists a function $W^{(q)} \,:\, \mathbb{R} \longrightarrow [0, \infty)\/$ such that $W^{(q)}(y)=0\/$, for all $y < 0\/$ satisfying
	\begin{equation}
\int_0^{\infty} e^{-\lambda y} \,W^{(q)}(y)\,dy = \frac{1}{\psi(\lambda)-q}, \qquad \lambda >\Phi(q)\,.
	\label{scale1}
	\end{equation}
These are the so--called \emph{$q$-scale functions} $\{W^{(q)},\;q\geqslant 0\}\/$ of the process $X\/$ (see \cite{Kyp}), a key notion in the analysis of passage times for spectrally negative L\'evy processes. Note that for $q=0\/$, equation \eqref{scale1} defines the so--called scale function and we simply write $W := W^{(0)}\/$. 
%where $\Phi(q)$ is the largest solution of $\psi(\beta)=q$ defined in (\ref{inverse}). For simplicity, we shall write $W^{(0)}=W$.\\

The following theorem plays a key role here (we refer to \cite{Bensalah1} for a thorough discussion and the proof). It gives an expression for a general form of the expected discounted penalty function (EDPF), $\EP(F,q,x)\/$, defined in \cite{Bensalah1} as
	\begin{equation}
\EP(F,q,x) = \mathbb{E}\Big[\sum_{n=1}^N e^{-q\,\tau^{(n)}} F_n( Y_{\tau^{(n-1)}},Y_{\tau^{(n)}} ) \, : \, \tau_x < \infty  \Big]\,,
	\label{def}
	\end{equation}
where $ F=(F_n)_{n \geq 0}\/$ is a sequence of given non--negative measurable functions from $\mathbb{R}_{+} \times \mathbb{R}_{+}$ to $\mathbb{R}\/$, and where $x,\,q \geq 0\/$ are also given. This generalizes the prior results on the EDPF in \cite{GS3} and \cite{Garrido}.

Let $H \ast G(\cdot)\/$  denote the convolution of $H(\cdot)\/$ with $G(\cdot)\/$ defined by 
	$$\int_{A} f(u)\,H \ast G(du) = \int_{\{y+v\in A\}} f(y+v)\, H(dy)\,H(dv)\,,$$
for any Borel set $A\/$ of $\mathbb{R}\times \mathbb{R}\/$ and nonnegative, bounded Borel function $f\/$. As usual $f^{\ast n}\/$, for $n\geq 1\/$, denotes the $n$--fold convolution of $f\/$ with itself and $f^{\ast 0}\/$ is the distribution function corresponding to the Dirac measure at zero. For more details about this result, we refer to \cite{Bensalah1}.

	\begin{thm}
Consider the risk model in \eqref{riskmodel3}. For $x,\,q\geq 0\/$, the extended EDPF is given by
	\begin{eqnarray}
\EP(F,q,x) &=& \phi(w,q,x) + \sum_{n=0}^\infty \int_{(x,\infty)} \int_{(0,\infty)} e^{\Phi(q)\,(u+v)} F_{n+2}( v,u+v ) \nonumber \\
	&&\qquad \times H\ast G(du)\, H^{\ast n}\ast G^{\ast n}\ast T_x(dv)\,, \label{SEDPF}
	\end{eqnarray}
where the penalty $w\/$ is a measurable Borel--function, $F_1(z,y)=w(x-z,y-x)\/$, for  $y\geq x \geq z\/$  and $\phi(w,q,x)\/$ is the classical EDPF (Gerber--Shiu) function defined by
	\begin{equation}
\phi(w,q,x) = \mathbb{E}\Big[e^{-q\,\tau_x}\,w( x-Y_{\tau_x^-},Y_{\tau_x}-x ) \,;\, \tau_x < \infty  \Big]\,,
	\label{edpfs1}
	\end{equation}
	\label{characterization}
(see \cite{GS}--\cite{GS4} for details on the classical Gerber--Shiu function). 
	\end{thm}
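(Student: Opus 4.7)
My plan is to decompose the sum in the definition \eqref{def} of $\EP(F,q,x)\/$ according to the index $n\/$, isolating the $n=1\/$ contribution (the first ruin, possibly caused by either a Brownian oscillation or a jump) from the terms $n\geq 2\/$ (subsequent records, which by construction are produced only by jumps of the subordinator $S\/$). The $n=1\/$ piece will match $\phi(w,q,x)\/$ directly, while the tail will be recovered through iterated use of the strong Markov property at the record times $\tau^{(n)}\/$ combined with the standard fluctuation identities for the spectrally negative L\'evy process $X=-Y\/$.

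For $n=1\/$, since $\tau^{(1)}=\tau_x\/$ and under the natural convention that $Y_{\tau^{(0)}}\/$ denotes the pre--ruin value $Y_{\tau_x^-}\/$, the prescribed change of variable $F_1(z,y)=w(x-z,y-x)\/$ converts
\[
\mathbb{E}\Big[e^{-q\,\tau^{(1)}}\,F_1\big(Y_{\tau^{(0)}},Y_{\tau^{(1)}}\big)\,;\,\tau_x<\infty\Big] \;=\; \mathbb{E}\Big[e^{-q\,\tau_x}\,w\big(x-Y_{\tau_x^-},\,Y_{\tau_x}-x\big)\,;\,\tau_x<\infty\Big]\,,
\]
which is exactly $\phi(w,q,x)\/$ by \eqref{edpfs1}. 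This step handles both mechanisms that can produce the first ruin (oscillation or jump), because $\phi\/$ is built to accommodate both.

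For $n\geq 2\/$ I would proceed by induction on $n\/$, applying the strong Markov property at each $\tau^{(n-1)}\/$: conditionally on $\mathcal{F}_{\tau^{(n-1)}}\/$, the shifted process $(Y_{\tau^{(n-1)}+t}-Y_{\tau^{(n-1)}})_{t\geq 0}\/$ is an independent copy of $Y\/$, so $\tau^{(n)}-\tau^{(n-1)}\/$ coincides in law with the first new--record time of that process caused by a jump. Invoking the classical quintuple identity for spectrally negative L\'evy processes (see \cite{Kyp} and \cite{Doney Kyprianou}), which expresses the joint discounted law of the pair (undershoot, overshoot) at such a record in terms of the $q\/$-scale function $W^{(q)}\/$, the L\'evy measure $\nu\/$, and the exponential factor $e^{\Phi(q)\,\cdot}\/$ from \eqref{inverse}, I can identify two generic measures $H\/$ and $G\/$ capturing the two components of a single step along the record ladder. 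Convolving them $n\/$ times and then against the anchoring law $T_x\/$---which encapsulates the discounted joint distribution at the first ruin time $\tau_x\/$---produces the factor $H^{\ast n}\ast G^{\ast n}\ast T_x(dv)\/$ in \eqref{SEDPF}; the remaining $H\ast G(du)\/$ encodes the final step before reading off $F_{n+2}(v,u+v)\/$; and the cumulative discounting along the whole ladder collapses to the single Esscher factor $e^{\Phi(q)(u+v)}\/$.

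Two technical points will require care. First, the interchange of sum and expectation in \eqref{def} must be justified; this is afforded by the a.s.~discreteness of $(\tau^{(n)})_{n\geq 1}\/$ recalled in the paragraph following \eqref{N} (Theorem 4.1 of Huzak \emph{et al.}~(2004)), together with the non--negativity of each $F_n\/$. The harder step, which I anticipate being the main obstacle, is verifying that the inductive application of the strong Markov property collapses cleanly into the factored convolutional form of \eqref{SEDPF}: one must read off the ingredients $H\/$ and $G\/$ correctly from the fluctuation identities and check that the $q\/$-discounting compounds exactly as the Esscher factor in the product. Once this bookkeeping is in place, the remainder is algebraic manipulation of scale--function identities, carried out in detail in \cite{Bensalah1}.
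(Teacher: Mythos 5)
The paper itself does not prove Theorem \ref{characterization}: it quotes the result and refers the reader to \cite{Bensalah1}, so the comparison can only be with the argument of that source. Your skeleton follows the same route: split off the $n=1$ term, which under the convention $\tau^{(0)}=\tau_x^-$ and $F_1(z,y)=w(x-z,y-x)$ is exactly $\phi(w,q,x)$; justify the sum/expectation interchange by nonnegativity of the $F_n$ and the a.s.\ discreteness of the record times from \cite{Huzak}; use the exponential change of measure \eqref{change measure} to turn each discount $e^{-q\,\tau^{(n)}}$ into the weight $e^{\Phi(q)\,Y_{\tau^{(n)}}}$ under $\widetilde{\mathbb{P}}$; and exploit the strong Markov property at the record times so that the law of $Y_{\tau^{(n)}}$ factorizes into convolution powers anchored at the overshoot law $T_x$ of the first ruin. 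All of that is the correct frame.

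The genuine gap is that you defer precisely the step that constitutes the theorem: the identification of the one-step law as $H\ast G$. What must be shown is that, under $\widetilde{\mathbb{P}}$, the increments $Y_{\tau^{(n+1)}}-Y_{\tau^{(n)}}$, $n\ge 1$, are i.i.d., independent of $Y_{\tau^{(1)}}$ (whose law is $T_x$), and that each increment decomposes into an independent sum of an exponential amount with parameter $2\widetilde{c}/\sigma^2$ (the continuous rise of the supremum produced by the Brownian part between two jump records, giving $G$) and a jump overshoot with the specific density $H$; one must also explain why the first step is governed by $T_x$ rather than $H\ast G$ (the first ruin can occur by creeping, later records cannot). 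You call this ``the main obstacle'' and point to the Doney--Kyprianou quintuple law, but that identity concerns the overshoot and undershoot at first passage over a fixed level, not the increments at successive new suprema attained by jumps of $S$ in the perturbed model; the tool actually needed is the decomposition of new suprema reached by jumps in Huzak \emph{et al.}~(2004) (their Theorems 3.1 and 4.1), on which \cite{Bensalah1} builds, together with the computation of the tilted one-step law that produces the explicit $H$, $G$ and the factor $e^{\Phi(q)(u+v)}$ with the correct arguments $F_{n+2}(v,u+v)$. Without carrying out this identification, formula \eqref{SEDPF} is asserted rather than derived; the $n=1$ reduction to $\phi(w,q,x)$ and the convergence bookkeeping are the easy parts and are handled correctly.
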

We assume in Theorem \ref{characterization} that $\tau^{(0)}=\tau_x^-\/$ and 
	\begin{equation}
F_1(\cdot,x)=0 \quad\text{and}\quad F_n(y,y)=0\,, \qquad\text{for }y\in(x,\infty)\,,\;n>1\,.
	\label{cdt1}
	\end{equation}
%for all $y\geq 0$.\\
Note that condition \eqref{cdt1} is used to exclude the events $\{Y_{\tau_x}=x\}\/$ and $\{Y_{\tau^{(n-1)}}=Y_{\tau^{(n)}}\}\/$. Also note that $\EP(F,q,x)\/$ is an extension of the classical EDPF $\phi(w,x,q)$ in \eqref{edpfs1}. In particular, it reduces to it if $F_1(u,v)=w(x-u, v-x)\/$ and $F_n=0\/$ for $n\geq 2\/$.

Here the density $H(\cdot)\/$ is given by
	$$H(du) = \frac{e^{-\Phi(q)\/u}}{c+\Phi(q)\sigma^2} \int_0^\infty e^{-\Phi(q)\,y}\,\nu(du+y)\,dy\,,\qquad u>0\,,$$
and $G(\cdot)\/$ is an exponential distribution function with parameter $2\widetilde{c}/\sigma^2\/$, where $\widetilde{c}=c+\sigma^2\,\Phi(q)\/$.

Then $T_x(\cdot)\/$ is the distribution of the overshoot at $\tau_x\/$ under the probability measure $ \widetilde{\mathbb{P}}\/$, with density process defined by
	\begin{equation}
\frac{d\widetilde{\mathbb{P}}}{d\mathbb{P}}\Big|_{\mathcal{F}_t} = e^{-\Phi(q)\,Y_t-q\,t}\,, 
	\label{change measure}
	\end{equation}
where $\Phi(q)\/$ is the right inverse of $\psi\/$ defined in \eqref{inverse}. The overshoot distribution $T_x(\cdot)\/$ is thus given by
	\begin{equation}
T_x(du) = \int_0^\infty \int_0^v e^{-\Phi(q)\,u}\,\nu(du-x+v)\,\big[W'^{(q)}(x-y)-\Phi(q)\,W^{(q)}(x-y)\big]\,dy\,dv\,.
	\label{overshoot}
	\end{equation}
%where $\widetilde{T}_x(du)=\widetilde{\mathbb{P}}( Y_{\tau_x} \in du ; \tau_x < \infty  )$. 

\section{Fair premiums for reinsurance by capital injections (RCI)}\label{EDVCI def}

As in a standard insurance contract, here the reinsurer charges a premium to the insurer, that is  equal or larger than the expected value of the ceded risk. There is then a trade-off between the risk retained by the insurer and the premium paid to the reinsurer. Given a fixed retention, determining the optimal premium is an important issue for the reinsurer in such a context. The main objective of this paper is to derive an expression for a fair premium for this new reinsurance by capital injections (RCI) contract, for the risk model in \eqref{riskmodel3}. The premium derived here is in net terms over an infinite contract term. For annual or other short--term reinsurance contracts, our premium would have to be allocated to each annual/other interval. This is however beyond the scope of this first study, as is the optimization
of the choice of reinsurance retention/premium for these RCI treaties. 

We consider two types of RCI contracts, {\it proportional} RCI and {\it extreme--loss} RCI, which can be combined to design contracts that would be more appropriate in practice. For proportional RCI, the ``proportion" of risk ceded to the reinsurer for a claim of size $C\/$ is $a\,C\/$ , where $a < 1\/$. While in the extreme--loss case the ceded amount of risk to the reinsurer is $C \;\mathbb I_{\{ C\geq m\}}\/$ , where $m \geq 0\/$. That is, the ceded amount of risk is the total deficit $C\/$ if $C\/$ exceeds a certain level of retention $m \geq 0\/$, and $0\/$ otherwise. The reason to set $a < 1\/$ in the proportional RCI contract is to avoid moral hazard, a problem not present with the proposed extreme--loss RCI design. Note, however that from a purely mathematical point of view, the proportional RCI premium formulas below are also valid for $a \ge 1\/$, so that the capital injections provide the insurer sufficient funds to recover from ruin and restart from a solvent position, without a need to raise additional capital from stockholders. 

Based on the above preliminary results, an explicit form for fair RCI net premiums, defined as the discounted value of future capital injections, is: 
	\begin{equation}
\Pi(q,x,r(\cdot)) = \mathbb{E}\Big[ \sum_{n=0}^N  e^{-q\,\tau^{(n+1)}} r(C_n) \Big]\,,
	\label{EDVCI}
	\end{equation}
where $r(C_n)\/$ is given either by $r(C_n)=a\,C_n\/$ (proportional case) or $r(C_n)=C_{n}\, \mathbb I_{\{ C_{n}\geq m\}}\/$ (extreme--loss). Note that here $C_n\/$ denotes the size of the  $n$-th reinsurance claim after ruin (the $n$-th capital injection), that is  
	\begin{eqnarray}
C_n &=& R_{\tau^{(n)}} - R_{\tau^{(n+1)}} \nonumber\\
	&=& Y_{\tau^{(n+1)}} - Y_{\tau^{(n)}}, \qquad \text{ for } n\geq 1\,,
	\end{eqnarray}
and $C_0=Y_{\tau^{(1)}}-x\/$, for $n=0\/$, where $(\tau^{(n)})_{n\geq1}\/$ is the sequence of insurer's claim times corresponding to new records of $Y\/$, defined in Section \ref{riskEDPF}, in Equations \eqref{T1} and \eqref{T2}.

Also note that the extreme--loss capital injections occur at times $\tau^{(n)}\/$, of new records set by jumps. No capital injections are received if the surplus creeps below the threshold barrier only due to Brownian oscillations, without jumps. However, the cost of the Brownian oscillations between 2 record jumps is always included in the capital injection at the next record jump. The interpretation here is that in our model Brownian oscillations represent (smaller) 
capital requirements that are less likely to amount to a deficit creeping over the solvency 
threshold $m\/$, and hence such capital can be provided more easily by stockholders. By contrast,  
the subordinator jumps here represent larger (less predictable) losses that should cross 
the threshold more frequently and/or more deeply and for which the insurer needs the 
reinsurance capital injections. Clearly such a reinsurance scheme is best suited for 
companies with observed surplus experience where deficits caused by jumps dominate those 
caused by oscillations.

The following theorem gives expressions for the RCI premiums in both, the proportional and extreme--loss cases, in terms of $q$--scale functions, where $q\/$ is the present value discounting rate and the L\'{e}vy measure. This is the  main contribution and it is based on the result of Theorem \ref{characterization}.
%\newpage
	\begin{thm} \label{CI}
Consider the risk model introduced in \eqref{riskmodel3}: 
\\
1. The extreme--loss RCI reinsurance premium for $r(C_n)=C_n\, \mathbb I_{\{ C_n\geq m\}}\/$ in \eqref{EDVCI} is given by
	\begin{equation}
\Pi_1(q,x,m) = \varphi(q,x,m) + \delta(q, \sigma,m)\,\kappa(q,x)\,,
%+\sum_{n=1}^{\infty} \int_{(x,\infty)}\int_{(0,\infty)}e^{\Phi(q)(u+v)} u \nonumber \\ d(H^{(q)}\otimes G^{(q)})(u)d(H^{(q)\otimes(n-1)}\otimes G^{(q)\otimes(n-1)}\otimes T^{(q)}_x)(v)\rho^n.
	\label{V2}
	\end{equation}
where
\begin{eqnarray*}
\delta(q,\sigma,m)&=&\frac{ c\; \big[2q +\Phi(q)(\rho+ m-1)(2c + \Phi(q)\sigma^2) \big]}{q\Phi(q)(2c+\Phi(q)\sigma^2)}\;e^{-\frac{2\;c + \phi(q)\sigma^2}{\sigma^2}m}  \\
&& \quad +\frac{2c+\Phi(q)\sigma^2 }{q\sigma^2} \int_0^m e^{-\frac{2c+\Phi(q)\sigma^2}{\sigma^2}v} \; \Big[ \int_0^{\infty}[1+(m\Phi(q)-1)e^{-\Phi(q)u}] \\ 
&&\qquad \times \nu(u+m-v,\;\infty)\;du \Big]\,dv\,.
\end{eqnarray*}
\\
	%\begin{eqnarray*}
%&& \delta(q,\sigma,m) = \frac{c\,\big[2q +\Phi(q)(\rho -1)(2c + \Phi(q)\sigma^2)\big]}{q\,\Phi(q)[2c+\Phi(q)\sigma^2]}\,e^{-\frac{2c + \phi(q)\sigma^2}{\sigma^2}\,m} + \frac{2c+\Phi(q)\sigma^2}{q\sigma^2} \\ 
%&&\qquad \times \int_0^m e^{-\frac{2c+\Phi(q)\sigma^2}{\sigma^2}\,v} \, \Big[\int_0^\infty (1-e^{-\Phi(q)\,u})\, \nu(u+m-v,\infty)\,du \Big]\,dv\,.
	%\end{eqnarray*}
%\\
2. The proportional RCI reinsurance premium for $r(C_n)= a\,C_n\/$ in \eqref{EDVCI} is given by
	\begin{equation}
\Pi_2(q,x,a) = a\,\varphi(q,x,0) + a\,\delta(q, \sigma, 0)\,\kappa(q,x)\,. 
%+\sum_{n=1}^{\infty} \int_{(x,\infty)}\int_{(0,\infty)}e^{\Phi(q)(u+v)} u \nonumber \\ d(H^{(q)}\otimes G^{(q)})(u)d(H^{(q)\otimes(n-1)}\otimes G^{(q)\otimes(n-1)}\otimes T^{(q)}_x)(v)\rho^n;
	\label{V}
	\end{equation}
The functions $\kappa(q,x)\/$ and $\varphi(q,x,m)\/$ are given explicitly in terms of the $q$--scale function and  the L\'{e}vy measure as: 
	\begin{equation}
\varphi(q,x,m) = f\ast h_m(x)
	\label{varphi}
	\end{equation}
and 
	\begin{equation}
\kappa(q,x)= f\ast t(x)\,,
	\label{kappa}
	\end{equation}
where
	$$f(x)= W'^{(q)}(x)-\Phi(q)\,W^{(q)}(x)\,,$$
	\begin{equation}
h_m(x) = e^{\Phi(q)\,x}\,\int_x^\infty e^{-\Phi(q)\,v} \int_{(v,\infty)}(u-v)\,\nu(du+m)\, dv
	\label{h}
	\end{equation}
and
	\begin{equation}
t(x) = e^{\Phi(q)\,x} \int_x^\infty e^{-\Phi(q)\,v}\,\nu(v,\infty)\,dv\,.
	\label{t2}
	\end{equation}
 %$\Phi(q)$ is the right inverse of $\psi$, that is to say, for each $q\geq 0$,
	\end{thm}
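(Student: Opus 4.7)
The plan is to recast $\Pi(q,x,r(\cdot))$ as the extended EDPF $\EP(F,q,x)$ from \eqref{def}, apply Theorem~\ref{characterization}, and then simplify. For the penalty sequence take $w(u,v)=r(v)$ together with $F_n(z,y)=r(y-z)$ for $n\geq 2$. Then $F_1(Y_{\tau^{(0)}},Y_{\tau^{(1)}})=w(x-Y_{\tau_x^-},Y_{\tau_x}-x)=r(Y_{\tau^{(1)}}-x)=r(C_0)$ and $F_n(Y_{\tau^{(n-1)}},Y_{\tau^{(n)}})=r(C_{n-1})$ for $n\geq 2$, so after reindexing $k=n+1$ the sum in \eqref{EDVCI} gives exactly $\EP(F,q,x)=\Pi(q,x,r(\cdot))$. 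One also checks the exclusion conditions \eqref{cdt1}: $F_1(\cdot,x)=r(0)=0$ holds for both choices of $r$, and $F_n(y,y)=r(0)=0$.

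The decisive feature of this choice is that $F_{n+2}(v,u+v)=r((u+v)-v)=r(u)$ is independent of both $v$ and $n$. Combining with $e^{\Phi(q)(u+v)}=e^{\Phi(q)u}e^{\Phi(q)v}$, the double series in \eqref{SEDPF} factors into a product of a scalar and an $x$--dependent series:
\begin{equation*}
\Pi(q,x,r(\cdot)) = \phi(w,q,x) + \Big(\int_{(0,\infty)} e^{\Phi(q)u}\,r(u)\, H\ast G(du)\Big) \sum_{n=0}^\infty \int_{(x,\infty)} e^{\Phi(q)v}\, H^{\ast n}\ast G^{\ast n}\ast T_x(dv)\,.
\end{equation*}
The three pieces of this decomposition will be matched to $\varphi(q,x,m)$, $\delta(q,\sigma,m)$ and $\kappa(q,x)$ respectively.

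For the extreme-loss case $r(u)=u\,\mathbb{I}_{\{u\geq m\}}$, the first piece is the classical Gerber-Shiu function associated to $w(u,v)=v\,\mathbb{I}_{\{v\geq m\}}$. Its representation via the triple law of spectrally negative L\'evy processes involves precisely $f=W'^{(q)}-\Phi(q)W^{(q)}$ and the tail of $\nu$ (see \cite{GS3,Garrido,Bensalah1}), and yields $\varphi(q,x,m)=f\ast h_m(x)$ with $h_m$ as in \eqref{h}. The middle factor $\delta(q,\sigma,m)$ is then a direct calculation: $G$ is exponential with rate $2\widetilde{c}/\sigma^{2}$ and $H$ is given explicitly in terms of $\nu$, so $H\ast G$ can be written out and the integral $\int_{m}^{\infty} u\,e^{\Phi(q)u}\,(H\ast G)(du)$ evaluated. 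The boundary contribution at $u=m$ produces the explicit exponential summand $\exp\{-(2c+\Phi(q)\sigma^2)m/\sigma^2\}$, while a change of order of integration recasts the bulk term as the $v$--integral over $(0,m)$ of $\int_0^\infty[1+(m\Phi(q)-1)e^{-\Phi(q)u}]\nu(u+m-v,\infty)\,du$. The proportional case then follows by linearity: since $\nu$ is supported on $(0,\infty)$ one has $C_n>0$ a.s.\ on $\{\tau^{(n)}<\infty\}$, so $a u=a\,u\,\mathbb{I}_{\{u\geq 0\}}$ and formula \eqref{V} is obtained from \eqref{V2} by taking $m=0$ and multiplying by $a$.

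The main obstacle is the identification of the infinite sum with $\kappa(q,x)=f\ast t(x)$. The weight $e^{\Phi(q)v}$ reverses the Esscher change of measure \eqref{change measure} that defines $T_x$, so each term of the sum is an expectation under $\mathbb{P}$ involving the overshoot of $Y$ at the successive record times $\tau^{(n+1)}$. Summing the geometric series of convolutions $\sum_n H^{\ast n}\ast G^{\ast n}$ in the transform domain and invoking \eqref{overshoot}---whose density already carries the kernel $W'^{(q)}(x-y)-\Phi(q)W^{(q)}(x-y)=f(x-y)$---extracts $f$ from the renewal structure, while the remaining exponentially-weighted $\nu$--tail integral collapses to $t(x)$ as defined in \eqref{t2}. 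This Wiener-Hopf/renewal computation is the step where the full spectrally negative structure of $X$ enters; the detailed bookkeeping is precisely that used to establish Theorem~\ref{characterization} in \cite{Bensalah1}, so most of the work here is to verify that the specific form taken by $F_{n+2}$ allows the identities to be applied term-by-term.
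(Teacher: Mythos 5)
Your setup is the same as the paper's: the same choice of penalty sequence $F_1(\cdot,y)=r(y-x)$, $F_n(v,u)=r(u-v)$, the same verification of \eqref{cdt1}, the same factorization of the double series in \eqref{SEDPF} using that $F_{n+2}(v,u+v)=r(u)$, the identification of $\varphi$ and of the overshoot transform through Lemma 4.1 of Ben Salah (2014), and the proportional case obtained as $a\,\Pi_1(q,x,0)$. The gap is in the bookkeeping of the geometric series, and it matters because the theorem asserts explicit formulas for $\delta$ and $\kappa$ separately. After factoring, the $x$--dependent series is
\begin{equation*}
\sum_{n\geq 0}\int_{(x,\infty)}e^{\Phi(q)v}\,H^{\ast n}\ast G^{\ast n}\ast T_x(dv)
=\Big(\int_{(x,\infty)}e^{\Phi(q)v}\,T_x(dv)\Big)\sum_{n\geq 0}I^{\,n},
\qquad I=\int_{(0,\infty)}e^{\Phi(q)u}\,H\ast G(du),
\end{equation*}
and undoing the Esscher change of measure \eqref{change measure} gives, with no renewal or Wiener--Hopf summation at all, $\int_{(x,\infty)}e^{\Phi(q)v}T_x(dv)=\mathbb{E}\big[e^{-q\tau_x};\tau_x<\infty\big]=\kappa(q,x)=f\ast t(x)$: the kernel $f$ and the tail $t$ come from the single overshoot law \eqref{overshoot}, not from summing $\sum_n H^{\ast n}\ast G^{\ast n}$. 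The geometric factor is a nontrivial scalar, $I=1-\tfrac{2q}{\Phi(q)[2c+\sigma^2\Phi(q)]}\in(0,1)$ so $\sum_n I^n=\tfrac{\Phi(q)[2c+\Phi(q)\sigma^2]}{2q}$, and it belongs with the claim-size factor: $\delta(q,\sigma,m)=\big(\int_{(m,\infty)}u\,e^{\Phi(q)u}H\ast G(du)\big)\sum_{n\geq0}I^n$, not the bare integral $I_m$.

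As written, your two identifications --- ``$\delta(q,\sigma,m)=\int_m^\infty u\,e^{\Phi(q)u}(H\ast G)(du)$'' and ``the infinite sum collapses to $f\ast t(x)$'' --- are each false, off by the reciprocal factors $\tfrac{2q}{\Phi(q)[2c+\Phi(q)\sigma^2]}$ and its inverse. They compensate in the product, so the assembled premium would be numerically right only if you never substitute your claimed values; but the proof as proposed never establishes the theorem's displayed formula for $\delta$, and the asserted ``renewal'' collapse of the full sum to $t$ would fail the moment you computed it (you would find the extra factor). The repair is exactly the paper's computation: evaluate $\int e^{\Phi(q)u}H(du)$ and $\int e^{\Phi(q)u}G(du)$ separately (the first via integration by parts and $\psi(\Phi(q))=q$), verify $I\in(0,1)$ so the geometric series converges, attach $\sum_n I^n$ to $I_m$ to get $\delta$, and split $I_m$ over the exponential variable of $G$ being above or below $m$ (not a ``boundary contribution at $u=m$'') to obtain the two summands of $\delta(q,\sigma,m)$; $\kappa$ and $\varphi$ then follow from the overshoot density alone.
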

\begin{proof}
1. Consider the sequence of functions $F_1(\cdot,y)=(y-x)\,\mathbb I_{\{y-x\geq m\}}\/$ for $y>x\/$ and $F_n(v,u)=(u-v)\,\mathbb I_{\{u-v\geq m\}}\/$, for $u \geq v > x\/$ and $n\geq 2\/$, then the extended EDPF associated with $F\/$ and $q\/$, $\EP(F,q,x)\/$, in Theorem \ref{characterization}, is equal to the extreme--loss RCI reinsurance premium $\Pi_1(q,x,m)\/$. In fact, using Theorem \ref{characterization}, we can  derive \eqref{V2}:
	\begin{eqnarray}\label{id1}
\Pi_{1}(q,x,m)&=&\mathbb{E}\Big[ e^{-q\tau_x}\,(Y_{\tau_x}-x )\,\mathbb I_{\{Y_{\tau_x}-x\geq m\}};  \tau_x < \infty  \Big]+\sum_{n=0}^{\infty} \int_{(x,\infty)}\int_{(0,\infty)} \nonumber\\
	&& e^{\Phi(q)(u+v)} \,u \,\mathbb I_{\{u\geq m\}}\, (H\ast G)(du)(H^{\ast n}\ast G^{\ast n}\ast T_x)(dv)\nonumber\\
&=&\varphi(q,x,m)+\int_{(0,\infty)}e^{\Phi(q)u} \,u \,\mathbb I_{\{u\geq m\}}\,H\ast G(du)\nonumber\\
	&& \sum_{n=0}^{\infty} \int_{(x,\infty)}e^{\Phi(q)v}H^{\ast n}\ast G^{\ast n}\ast T_x(dv)\nonumber\\
%&=&\varphi(q,x,m)+e^{-\frac{2c + \phi(q)\sigma^2}{\sigma^2}m}\int_{(0,\infty)}e^{\Phi(q)u} u\;\widetilde{H}\ast \widetilde{G}(du) \sum_{n=0}^{\infty} \int_{(x,\infty)}e^{\Phi(q)v}\widetilde{H}^{\ast n}\ast \widetilde{G}^{\ast n}\ast \widetilde{T}_x(dv)\nonumber\\
&=&\varphi(q,x,m)+\kappa(q,x) \;\underbrace{\int_{(m,\infty)} e^{\Phi(q)u}\, u\,H\ast G(du)}_{I_m} \nonumber\\
	&&
 \sum_{n=0}^{\infty} 
 \big[\underbrace{\int_{(0,\infty)}e^{\Phi(q)v}H \ast G(dv)}_{I}\big]^n\,,\label{id1}
\end{eqnarray}
where 
	$$\varphi(q,x,m) = \mathbb{E}\Big[ e^{-q\,\tau_x}\,(Y_{\tau_x}-x)\,\mathbb I_{\{Y_{\tau_x}-x\;\geq m\}}\,;\, 
		\tau_x < \infty  \Big]$$
and
	\begin{equation}
\kappa(q,x) = \int_{(x,\infty)} e^{\Phi(q)\,v}\, T_x(dv)
	= \mathbb{E}\big[e^{-q\,\tau_x} \,;\, \tau_x < \infty]\,,
	\label{kappa}
	\end{equation} 
since $T_x(dv) = \widetilde{\mathbb{P}}(Y_{\tau_x}\in dv,\, \tau_x < \infty)\/$, for $v>x\/$.
\\
%,  hence by (\ref{change measure}), 
%(II) is equal to 
%\begin{eqnarray}\widetilde{\mathbb{E}}\big[e^{\Phi(q)Y_{\tau_x}};\; \tau_x < \infty] &=& \mathbb{E}\big[e^{-q\tau_x};\; \tau_x < \infty]\nonumber \\&=&\kappa(q,x).\end{eqnarray}
%%%%%%%%%%%%
%%%%%%%%%%%

Now the term $I$ in \eqref{id1} is equal to 
	\begin{equation}
I = \int_0^\infty e^{\Phi(q)\,u} \, H\ast G(du) = \int_0^\infty
	e^{\Phi(q)\,u} \, H(du) \,\int_0^\infty e^{\Phi(q)\,u} \, 
	G(du)\,.
	\label{egal}
	\end{equation}
The first integral in the above equation is given by 
	\begin{eqnarray}\label{egal1}
\int_{(0,\infty)} e^{\Phi(q)\,u} \,H(du)
%&=&\int_{(0,\infty)}\frac{1}{c+\Phi(q)\sigma^2}\int_0^{\infty}e^{-\Phi(q)y} \;\nu(du+y)dy\nonumber\\
	&=& \frac{1}{c+\Phi(q)\sigma^2} \int_0^{\infty} e^{-\Phi(q)\,y} 
		\,\nu(y,\,\infty)\,dy \nonumber\\
	&=& \frac{2c\Phi(q)+\sigma^2 \Phi(q)^2-2q}{2 \Phi(q)\big[c+\Phi(q)\sigma^2\big]} \,,
	\end{eqnarray}
where the last equality uses the identity $\psi(\Phi(q))=q\/$ and integration by parts.
%Recall that \begin{equation}\label{Phi}
%\psi(\Phi(q))=q \;,
%\end{equation}by using  integration by part, (\ref{Phi}) is equivalent to
%\begin{equation}
%c\Phi(q) + \frac{\sigma^2\Phi(q)^2}{2}-\Phi(q)\int_0^{\infty}e^{-\Phi(q)y} \;\nu(y,\;\infty)dy =q
%\end{equation}and then,
%\begin{equation}\label{egal1}
%\int_0^{\infty}e^{-\Phi(q)y} \;\nu(y,\infty)dy = c + \frac{\sigma^2 \Phi(q)}{2}-\frac{q}{\phi(q)}.
%\end{equation}

The second integral in \eqref{egal} is 
	\begin{eqnarray}
\int_{(0,\infty)} e^{\Phi(q)\,u} \,G(du)
%&=&\int_{(0,\infty)}\frac{2\widetilde{c}}{\sigma^2}e^{-(\frac{2\widetilde{c}}{\sigma^2}-\Phi(q))y} dy\nonumber\\
	&=& \frac{2[c+\phi(q)\sigma^2]}{2c+\phi(q)\sigma^2}\,.
%1+ \frac{\phi(q)\sigma^2}{2c+\phi(q)\sigma^2}
%=\alpha(q,\sigma).
	\label{egal2}
	\end{eqnarray}
By substituting \eqref{egal1} and \eqref{egal2} in \eqref{egal}, we conclude that 
	$$\int_{(0,\infty)} e^{\Phi(q)\,u} \, H\ast G(du)
		= 1-\frac{2q}{\Phi(q)\big[2c+\sigma^2  \Phi(q)\big]} \in (0,1)\,.$$
%&=&\frac{ 4q+2\phi(q)\sigma^2-\phi(q)(2\;c+\phi(q)\sigma^2) }{\phi(q)(2c+\phi(q)\sigma^2)}.
%&=&\frac{\alpha(q,\sigma)}{c+\sigma^2 \Phi(q)}\big(c+\sigma^2 \Phi(q)-\frac{q}{\Phi(q)} \big)\nonumber\\
%&=&\frac{(c+\phi(q)\sigma^2)[(1-2c)\Phi(q)-\phi(q)^2\sigma^2 +2q] + c \phi(q)}{\phi(q)(2c+\phi(q)\sigma^2)}\\
%\alpha(q,\sigma)\big[1-\frac{q+\frac{\sigma^2}{2} \Phi(q)}{\Phi(q)(c+\frac{\sigma^2}{2}  \Phi(q))} \big]\nonumber\\
%&=&\xi(q, \sigma).

Finally the term $I_m\/$ in \eqref{id1} is equal to 

\begin{eqnarray}
I_m &=&\int_{u+v >m}e^{\Phi(q)(u+v)} (u+v)\,H(du)G(dv)\nonumber\\
&=& \underbrace{\int_{u >0}\int_{v >m}e^{\Phi(q)(u+v)} (u+v)\;H(du)G(dv)}_{I_m^1} \nonumber\\
&& \qquad +\underbrace{\int_{u >m-v}\int_{0<v <m}e^{\Phi(q)\,(u+v)} \,(u+v)\,H(du)\,G(dv)}_{I_m^2}
%&=&\int_{(0,\infty)}e^{\Phi(q)u} \big[\int_{(0,\infty)}(u+v)e^{\Phi(q)v} \widetilde{G}(dv)\big] \;\widetilde{H}(du)
%\nonumber\\
%&=&\int_{(0,\infty)}e^{\Phi(q)u} \big[u\int_0^{\infty}\frac{2\widetilde{c}}{\sigma^2}e^{-(\frac{2\widetilde{c}}{\sigma^2}-\Phi(q))v} dv \nonumber\\&&+\int_0^{\infty}\frac{2\widetilde{c}}{\sigma^2}ve^{-(\frac{2\widetilde{c}}{\sigma^2}-\Phi(q))v} dv\big] \;\widetilde{H}(du)
\nonumber\\ 
&=& I_m^1 + I_m^2\,, \label{I1+I2}
\end{eqnarray}
where
\begin{eqnarray}
I_m^1&=&\int_{(0,\infty)} e^{\Phi(q)u} \big[\int_{(m,\infty)} (u+v) \, e^{\Phi(q)v}\, G(dv)\big] \,H(du)\nonumber\\
&=&e^{-[\frac{2\,c + \phi(q)\sigma^2}{\sigma^2}]\,m}\,\int_{(0,\infty)} e^{\Phi(q)u}\, \big[u\int_0^{\infty} \frac{2\widetilde{c}}{\sigma^2} \,e^{-[\frac{2\widetilde{c}}{\sigma^2}-\Phi(q)]v}\, dv \nonumber\\
&&\qquad+\int_0^{\infty} \frac{2\widetilde{c}}{\sigma^2}\,(v+m)\,e^{-[\frac{2\widetilde{c}}{\sigma^2}-\Phi(q)]v} \,dv\big] \,H(du)
\nonumber\\
&=&e^{-[\frac{2\,c + \phi(q)\sigma^2}{\sigma^2}]m} \, \int_{(0,\infty)} e^{\Phi(q)u} \,\Big[\frac{2(c+\Phi(q)\sigma^2)}{2c+\Phi(q)\sigma^2} \,\big(u +m+\frac{\sigma^2}{2c + \Phi(q)\sigma^2}\big)\Big] \,H(du)
\nonumber\\
&=&e^{-[\frac{2\,c + \phi(q)\sigma^2}{\sigma^2}]m}\,\frac{2 }{2c+\Phi(q)\sigma^2} \,\Big[\big(\frac{\sigma^2}{2c + \Phi(q)\sigma^2}+m\big) \int_{(0,\infty)} e^{-\Phi(q)y}\,\nu(y, \infty)\,dy \nonumber \\
&&\qquad +\int_0^{\infty} \int_{(0,\infty)} u\,e^{-\Phi(q)y}\,\nu(du +y)\,dy\Big]\nonumber\\
&=&e^{-[\frac{2\widetilde{c}}{\sigma^2}+\Phi(q)]m} \,\frac{2}{2c+\Phi(q)\sigma^2} \, \Big[\big(\frac{\sigma^2}{2c + \Phi(q)\sigma^2}+m\big) \int_{(0,\infty)} e^{-\Phi(q)y}\,\nu(y, \infty)\,dy \nonumber\\
&&\qquad+\frac{1}{\Phi(q)}\,\big[\int_0^{\infty} \nu(y, \infty)\,dy - \int_0^{\infty} e^{-\Phi(q)y}\nu(y, \infty)\,dy\big]\Big]\nonumber\\
&=& e^{-[\frac{2\;c + \phi(q)\sigma^2}{\sigma^2}]m} \,\frac{2 }{2c+\Phi(q)\sigma^2}\Big[\big(\frac{\sigma^2}{2c + \Phi(q)\sigma^2}-\frac{1}{\Phi(q)}+m\big)  \nonumber \\
&& \qquad\times\int_{(0,\infty)} e^{-\Phi(q)y}\;\nu(y,\infty)dy +\frac{1}{\Phi(q)}\mathbb{E}[S_1]\Big] \nonumber\\
&=&e^{-[\frac{2\,c + \phi(q)\sigma^2}{\sigma^2}]m} \,\frac{2\;c\; \big[2q +\Phi(q)(\rho+m-1)(2c + \Phi(q)\sigma^2) \big]}{\Phi(q)^2\,[2c+\Phi(q)\sigma^2]^2}\,
\end{eqnarray}
and where in the last equality we used \eqref{egal1}. 
 
The second term $I_m^2\/$ in \eqref{I1+I2} is given by 
\begin{eqnarray}
I_m^2 &=&\int_{0<v<m} \int_{u>m-v} (u+v)\,e^{\Phi(q)u} e^{\Phi(q)v}G(dv) \,H(du)\nonumber\\
&=&\int_{0<v<m} G(dv) \int_{u>0} (u+m)\,e^{\Phi(q)(u+m)} \,H(du+m-v)\nonumber\\
&=&\frac{2}{\sigma^2} \int_{0<v<m} e^{-[\frac{2c+\Phi(q)\sigma^2}{\sigma^2}]v}\, dv \int_{u>0} \int_0^{\infty} (u+m) \,e^{-\Phi(q)y}\, \nu(du+m-v+y)\, dy \nonumber\\
&=& \frac{2}{\sigma^2} \int_{0<v<m} e^{-[\frac{2c+\Phi(q)\sigma^2}{\sigma^2}]v} dv \int_0^{\infty} \,e^{-\Phi(q)y} \int_{u>0} (u+m)\, \nu(du+m-v+y)\, dy \nonumber\\
&=&\frac{2}{\sigma^2}\int_{0<v<m} e^{-[\frac{2c+\Phi(q)\sigma^2}{\sigma^2}]v} \int_0^{\infty} \,e^{-\Phi(q)y} \big[\int_y^{\infty}  \nu(u+m-v,\,\infty)\,du \nonumber\\
&& \qquad + \nu(y+m-v,\;\infty)\big]\,dy\,dv\nonumber\\
&=&\frac{2}{\sigma^2 }\int_{0<v<m} e^{-[\frac{2c+\Phi(q)\sigma^2}{\sigma^2}]v}\, \frac{1}{\Phi(q)}\Big[\int_0^{\infty} \nu(u+m-v,\,\infty)\,du \\
&& \qquad -\int_0^{\infty} e^{-\Phi(q)y}\nu(y+m-v,\;\infty)\;dy \Big]\;dv\nonumber\\
&=&\frac{2}{\Phi(q)\sigma^2 } \int_0^m e^{-[\frac{2c+\Phi(q)\sigma^2}{\sigma^2}]v} \, \Big[ \int_0^{\infty} [1+(m\Phi(q)-1)e^{-\Phi(q)u}]\nonumber\\
&&\qquad \times \,\nu(u+m-v,\,\infty)\,du \Big]\,dv\,.
\end{eqnarray}
	
%\begin{eqnarray}
%&&\alpha(q,\sigma)\frac{1}{c+\Phi(q)\sigma^2}\Big[\frac{-2c}{\Phi(q)(2c + \Phi(q)\sigma^2)}\big(c + \frac{\sigma^2 \Phi(q)}{2}-\frac{q}{\phi(q)} \big) +\frac{c}{\Phi(q)}\rho \Big]\nonumber\\
%&&=\frac{\alpha(q,\sigma)c}{\Phi(q)(c+\Phi(q)\sigma^2)}\Big[\frac{2q}{\Phi(q)(2c + \Phi(q)\sigma^2)}+\rho -1 \Big]\nonumber\\
%&&=\delta(q,\sigma)
%\end{eqnarray}
%and then (I) is equal to
%\begin{equation}
 %\int_{(0,\infty)}e^{\Phi(q)u} u\;\widetilde{H}\ast \widetilde{G}(du)=\delta(q,\sigma).
%\end{equation}
Finally, using the expressions above for $I_m\/$ in \eqref{I1+I2} and
$I\/$ in \eqref{egal}, then \eqref{id1} is equal to  
	\begin{eqnarray}\label{id12}
\Pi_1(q,x,m) &=& \varphi(q,x,m) + \underbrace{(I_m^1+I_m^2) \sum_{n=0}^\infty 
		[I]^n}_{\delta(q,\sigma,m)}\, \kappa(x,q)\nonumber \\
	&=& \varphi(q,x,m) + \delta(q,\sigma,m)\,\kappa(x,q)\,,
	\end{eqnarray} 
where
	\begin{eqnarray*}
\delta(q,\sigma,m) &=& \frac{c\,\big[2q +\Phi(q)\,(\rho+m-1)\,(2c + \Phi(q)\sigma^2) \big]}
		{q\,\Phi(q)\,[2c+\Phi(q)\sigma^2]}\,e^{-[\frac{2\;c + \phi(q)\sigma^2}{\sigma^2}]m}\\ 
	&& \qquad + \frac{[2c+\Phi(q)\sigma^2]}{q\sigma^2} \int_0^m e^{-\frac{[2c+\Phi(q)\sigma^2}
		{\sigma^2}]v} \, \Big[ \int_0^\infty [1+(m\Phi(q)-1)\,e^{-\Phi(q)u}]\\
	&& \qquad \times \, \nu(u+m-v,\,\infty)\,du \Big]\,dv\,.
	\end{eqnarray*}
%\begin{eqnarray}\label{id2}
%\varphi(q,x,m)&+&\underbrace{\frac{ c\;\big[ 2q+ \phi(q)(\rho-1)(2c + \phi(q)\sigma^2) \big]  }{\phi(q)(2\;c+\phi(q)\sigma^2)}\;e^{-\frac{2\;c + \phi(q)\sigma^2}{\sigma^2}m}}_{\delta(q,\sigma,m)}\;\kappa(x,\sigma)\\\nonumber &=&\varphi(q,x,m)+\delta(q,\sigma,m)\;\kappa(x,\sigma).
%\end{eqnarray} 
Now, to complete the proof of the theorem, we need only to identify the two functions $\varphi(q,x,m)\/$ and $\kappa(x,q)\/$. Recall Lemma 4.1 in \cite{Bensalah1} that gives an explicit form for the classical EDPF, $\phi(w,q,x)\/$, defined in \eqref{edpfs1}. With it we can derive explicit expressions for the above functions $\varphi(q,x,m)\/$ and $\kappa(x,q)\/$:
	\begin{eqnarray}
\kappa(q,x) &=& \mathbb{E}\big[e^{-q\,\tau_x} \,;\, \tau_x < \infty] \nonumber\\
	&=& f\ast t(x)\,,
	\label{kappa1}
	\end{eqnarray}
where
	\begin{equation}
f(x) = W'^{(q)}(x)-\Phi(q)\,W^{(q)}(x) 
	\end{equation}
and
	\begin{equation}
t(x) = e^{\Phi(q)\,x} \int_x^\infty e^{-\Phi(q)\,v}\, \nu(v,\infty) dv\,,
	\label{t2}
	\end{equation}
while
	\begin{eqnarray}
\varphi(q,x,m) &=& \mathbb{E}\Big[ e^{-q\,\tau_x}\,(Y_{\tau_x}-x)\,\mathbb I_{\{Y_{\tau_x }-x\; \geq m \}}\,;\, \tau_x < \infty  \Big]\nonumber\\
	&=& \widetilde{\mathbb{E}}\Big[ e^{\Phi(q)\,Y_{\tau_x}}\, (Y_{\tau_x}-x) \,\mathbb I_{\{Y_{\tau_x}-x\; \geq m \}} \,;\, \tau_x < \infty  \Big] \nonumber\\
	&=& \int_0^\infty e^{\Phi(q)\,u}\,(u+m)\,T_x(du+x+m) 
		= f\ast h_m(x)\,,
	\end{eqnarray}
%\varphi(q,x,m)&=&\mathbb{E}\Big[ e^{-q\tau_x}(Y_{\tau_x}-x )\textcolor{blue}{\mathbb I_{\{Y_{\tau_x}-x\; \geq m \}}}; \tau_x < \infty  \Big]\nonumber\\
%&=&\widetilde{\mathbb{E}}\Big[ e^{\Phi(q)Y_{\tau_x}}(Y_{\tau_x}-x )\textcolor{blue}{\mathbb I_{\{Y_{\tau_x}-x\; \geq m \}}} ; \tau_x < \infty  \Big]\nonumber\\
%&=& \int_{(0,\infty)} e^{\Phi(q)u}(u\textcolor{blue}{+m})\widetilde{T}_x(du+x+m)\nonumber\\
%&=& f\ast h(x)\;,	
where $\widetilde{\mathbb{E}}\/$ is the expectation under $\widetilde{\mathbb{P}}\/$, $T_x(\cdot)\/$ is the overshoot distribution defined in \eqref{overshoot} and 
	\begin{eqnarray}
h_m(x) &=& e^{\Phi(q)\,x} \int_x^\infty e^{-\Phi(q)\,v} \int_{(0,\infty)} (u+m)\, \nu(du+v+m)\, dv\,.
		%\nonumber\\
	%&=& e^{\Phi(q)\,x} \int_x^\infty e^{-\Phi(q)\,v} \int_{(v+m,\infty)} (u-(v+m))\,
		%\nu(du)\,dv
		 \label{h}
	\end{eqnarray}
Then the first part of the theorem follows. \\\\
2. Following the same order of ideas above, the second part of Theorem \ref{CI} can be easily shown. In fact, the expression of the extended EDPF in \eqref{SEDPF} reduces the proportional RCI premium, since here we take $F_1(v,u)=a\,(u-x)\/$ and $F_n(v,u)=a\,(u-v)\,$, for $u \geq0\/$, $v \in \mathbb{R}\/$ and $n\geq 2\/$. Hence
	\begin{eqnarray}
\Pi_2(q,x,a) &=& \mathbb{E}\Big[e^{-q\,\tau_x}\,a\,(Y_{\tau_x}-x)\,;\,\tau_x < \infty\Big]
		+ \nonumber \\
	&& \qquad \sum_{n=0}^\infty \int_x^\infty \int_0^\infty e^{\Phi(q)\,(u+v)} \,a \, 
		u\, H\ast G(du) \, H^{\ast n}\ast 
		G^{\ast n}\ast T_x(dv)\nonumber\\
	&=& a\,\varphi(q,x,0) + a\, \kappa(q,x) \, \underbrace{\int_0^\infty e^{\Phi(q)\,(u+v)}
		\,u\, H\ast G(du)}_{I_0} \nonumber\\
	&&\qquad \times \,\sum_{n=0}^\infty \big[\underbrace{\int_0^\infty e^{\Phi(q)\,v}
		H \ast G(dv)}_{I}\big]^n \nonumber\\
	&=& a\,\Pi_1(q,x,0) = a\, \varphi(q,x,0) + a\,I_0 \, \sum_{n=0}^\infty [II]^n\,
		\kappa(q,x)\nonumber\\
	&=& a\, \varphi(q,x,0) + a\, \underbrace{\frac{c\big[2q+\phi(q)(\rho-1)
		[2c+\phi(q)\sigma^2]\big]}{q\phi(q)[2c+\phi(q)\sigma^2]}}_{\delta
		(q,\sigma,0)}\, \kappa(q,x)\nonumber \\
	&=& a\, \varphi(q,x,0) + a\, \delta(q,\sigma,0)\,\kappa(q,x) \,. \label{id2}
	\end{eqnarray}
	\end{proof}\\

The following section illustrates the results above for two particular cases, including the Cram\'er--Lundberg risk model, without a Brownian component.

\section{Examples: two classical risk models}\label{classical}

We study in this section particular examples of risk process $Y\/$ satisfying the general setting described in Section \ref{riskEDPF} for which the $q$--scale function has a tractable form.  These provide some interesting examples of insurance models with relatively simple expressions for the reinsurance by capital injections (RCI) premiums. 

In fact, a tractable form for the $q$--scale function is here inherited by the functions $\varphi(q,x,m)\/$ and $\kappa(q,x)\/$, defined in \eqref{varphi} and \eqref{kappa} respectively. These functions are key ingredients in the general expressions of Theorem \ref{CI}. In what follows we analyze in more detail some models for which we can have an explicit understanding of the reinsurance premium problem:
	\begin{itemize}
\item the classical Cram\'er--Lundberg model with exponential claims,
%\item the Gamma risk process,
\item the spectrally negative stable risk process.
%\item the Brownian perturbed model without claims,
%\item the Meromorphic risk process (Beta process).
%\item\textcolor{blue}{the Classical Cram\'er-Lundberg model with Brownian perturbation.}
	\end{itemize} 

%%%%%%%%%%%%%%%%%%%%%%%%%%%%%%%
%%%%%%%%%%%%%%%%%%%%%%%%%%%%%%%%
%%%%%%%%%%%%%%%%%%%%%%%%%%%%%%%%%%

\subsection{Classical Cram\'er--Lundberg model with exponential claims}\label{classic_case}

%\textcolor{blue}{Find a reference for the Cram\'er--Lundberg model.}

The so--called classical or Cram\'er--Lundberg model was introduced in \cite{lundberg}. The surplus process is a compound Poisson process starting at $x\geqslant 0$, i.e.,
	\begin{equation}
R_t = x + ct - \sum_{i=1}^{N_t} Z_i\,,
	\label{classrisk}
	\end{equation}
where the number of claims is assumed to follow a Poisson process $(N_t)_{t\geqslant 0}\/$ with intensity $\lambda\/$, independent of the positive and \emph{iid} random variables $(Z_n)_{n\geqslant 1}\/$ representing claim sizes. The loaded premium $c\/$ is of the form $c= (1+\theta)\,\lambda\,\mathbb{E}[Z_1]\/$ for some safety loading factor $\theta>0\/$. The form of the $q$--scale function in this model is relatively simple when claim sizes are exponentially distributed with mean $1/\mu\/$. In this case, the L\'evy measure takes the simple form $\nu(dx)=\lambda\,\mu\, e^{-\mu x}dx\/$. In turn, the Laplace exponent in (\ref{khinchine}) becomes
	\begin{equation}
\psi(s) = c\,s - \frac{\lambda\,s}{\mu+s}\,, \qquad s>0\,.
	\label{laplace_classical}
	\end{equation}
So, the premium rate is $c=\lambda\,(1+\theta)/\mu\/$ where $\theta>0\/$ is a positive security loading.

This model has been for long a textbook example for which the distribution of ruin--related quantities can be explicitly computed. Here, we study in detail the RCI reinsurance premium for this particular example. Moreover, we derive explicit expressions for the RCI premiums from Theorem \ref{CI}.
%  In Figure \ref{fi:general}, we show a simulated path of this model illustrating all depletion quantities introduced in this section.

The expression for the $q$--scale function in this case is known (see \cite{Kuz_Kyp_Riv}) and is given by 
	\begin{equation}
W^{(q)}(x) = \frac{\mu+\Phi(q)}{\eta_q } \, e^{\Phi(q)\,x} - \frac{\mu+\Theta(q)}{\eta_q }\,
	e^{\Theta(q)\,x}\,,
	\label{scalefunction}
	\end{equation}
where $\Phi(q)\/$ and $\Theta(q)\/$ are the solutions of $\psi(s)=q\/$, i.e.~$\Phi(q) = \frac{1}{2c}\,(q+\lambda -c\,\mu +\eta_q\/)$ and $\Theta(q)= \frac{1}{2c}\,(q + \lambda -c\,\mu -\eta_q\/)$, with $\eta_q = \sqrt{(q+\lambda-c\,\mu)^2+4q\,\mu\,c}\/$. For details see \cite{Kuz_Kyp_Riv}.

In this case, the expressions for $\kappa(q,x)\/$ and $\varphi(q,x,m)\/$ in \eqref{kappa} and \eqref{varphi} are  given by
	\begin{equation}
\varphi(q,x,m) = f\ast h_m(x)
	\label{overshoot2}
	\end{equation}
and
	\begin{equation}
\kappa(q,x) = f\ast t(x)\,,
	\label{k2}
	\end{equation}
where
	\begin{eqnarray}
f(x) &=& \frac{[\mu+\Theta(q)]\,[\Phi(q)-\Theta(q)]}{\eta_q}\, e^{\Theta(q)\,x}\nonumber\\
	&=& \Big[\frac{\mu+\Theta(q)}{c}\Big]\, e^{\Theta(q)x}\,,
	\label{k2}	
	\end{eqnarray}
	\begin{equation}
t(x) = \frac{\lambda}{(\Phi(q)+\mu)}\,e^{-\mu\, x}
	\label{t2}
	\end{equation}
and
	\begin{equation}
h_m(x) = \frac{\lambda\,[m\mu +1]}{\mu\,[\Phi(q)+\mu]}\,e^{-\mu\, (x+m)}\,.
	\label{h2}
	\end{equation}
	%h(x)=\frac{\lambda\;\textcolor{blue}{(m\mu +1)}}{\mu (\Phi(q)+\mu)}\;e^{-\mu (x+m)}\;.
%for $u>x,v>0$ and $0<y<x\wedge v$.
Hence the expressions of $\kappa(q,x)\/$ and $\varphi(q,x,m)\/$ reduce to 
	\begin{equation}
\varphi(q,x,m) = \frac{\lambda\,e^{-\mu\,m}\,[m\mu +1]}{\mu\,c\,[\mu+\Phi(q)]}[ e^{\Theta(q)\,x} - e^{-\mu\,x}]
	\label{overshoot3}
	\end{equation}
and 
	\begin{equation}
\kappa(q,x) = \frac{ \lambda}{c\,[\mu+\Phi(q)]}[ e^{\Theta(q)\,x} - e^{-\mu\,x}]\,.
	\label{k3}
	\end{equation}
Recall that in this model, the Brownian component vanishes and then the distribution $G(\cdot)\/$ reduces to a Dirac measure at $0\/$. Hence, the expression for $I_m\/$ in \eqref{I1+I2} is equal to 
	\begin{equation}
\int_{u >m} e^{\Phi(q)\,u}\,u \,H(du) = \frac{\lambda\,(m\mu +1)}{c\,\mu\,[\Phi(q)+\mu]}\,e^{-\mu m}\,,
	\label{k3}
	\end{equation}
and $II$ reduces to
	\begin{eqnarray*}
\int_{u >0} e^{\Phi(q)\,u} \,H(du) &=& 1-\frac{q}{\Phi(q)\,c}\,.
	\end{eqnarray*}
%\begin{equation}\label{k3}
 %\delta(q,m)=\frac{\lambda \Phi(q) e^{(\Phi(q)-\mu)m} }{\mu(\mu+\Phi(q))}\;.
%\end{equation}

Using Theorem \ref{CI}, we provide explicit expressions for the RCI reinsurance premiums. In fact, in the case of the classical model $\sigma=0\/$ and hence the extreme--loss RCI premium in \eqref{V2} reduces to
	\begin{eqnarray}
\Pi_{1}(q,x,m) &=& \varphi(q,x,m) + \frac{\lambda \Phi(q)\,(m\mu +1)\, e^{-\mu m}}{\mu\,q\,[\mu+\Phi(q)]}\, \kappa(q,x)\nonumber\\
&=& \frac{\lambda e^{-\mu m}\,(m\mu +1)}{\mu\,c\,[\mu+\Phi(q)]} \, \big[e^{\Theta(q)x} - e^{-\mu x}\big] \,\Big(1+\frac{\lambda \Phi(q)}{q[\mu+\Phi(q)]} \Big)\nonumber\\
&=& \frac{\lambda\,\Phi(q) e^{-\mu m}\,(m\mu +1)}{\mu\,q\,[\mu+\Phi(q)]}\,\big[e^{\Theta(q)x} - e^{-\mu x}\big]\,.
\label{V0}
\end{eqnarray}
Similarly, the proportional RCI premium in \eqref{V} reduces to
	\begin{eqnarray}
\Pi_{2}(q,x,a)&=&a\Pi_{1}(q,x,0)\nonumber\\
&=&\frac{\lambda \,a\,\Phi(q)}{\mu\,q\,[\mu+\Phi(q)]}\, \big[e^{\Theta(q)x} - e^{-\mu x}\big]\,,
%+\sum_{n=1}^{\infty} \int_{(x,\infty)}\int_{(0,\infty)}e^{\Phi(q)(u+v)} u \nonumber \\ d(H^{(q)}\otimes G^{(q)})(u)d(H^{(q)\otimes(n-1)}\otimes G^{(q)\otimes(n-1)}\otimes T^{(q)}_x)(v)\rho^n;
	\label{V1}
	\end{eqnarray}
where, recall, $q\/$ is the discount factor, $x\/$ the initial surplus and $a <1\/$ the factor applied to the risks ceded when a reinsurance capital injection is needed.
 
\subsection{Spectrally negative stable process}
\label{sec: Stable}

In this subsection, we study the case when the surplus process is driven by a spectrally negative stable process with stability parameter $\alpha \in (1,2)\/$. This model was studied in the insurance context in \cite{Furrer}. We calculate here the RCI reinsurance premium and give explicit expressions for both, proportional and extreme--loss RCI contracts.

Let $(X_t)_{t\geq 0}\/$ be a spectrally negative stable process with stability parameter $\alpha \in (1,2)\/$ and Laplace exponent $\psi(s) =(s+c)^{\alpha}-c^\alpha\/$, for $c,\,s>0\/$. Here the L\'evy measure in \eqref{khinchine} is given by $\nu(dx)= \frac{e^{-c\,x}}{x^{1+\alpha}\, \Gamma(-\alpha)}\,dx\/$, for $x>0\/$ and $\Gamma(u)\/$ is the \emph{gamma} function. It can be seen (for example \cite{Bensalah2, Biffis}) that  
	\begin{equation}
W^{(q)}(x) = e^{-c\,x}\, x^{\alpha -1}\, E_{\alpha, \alpha}\big[(q+c^{\alpha})\, x^\alpha\big]\,,
	\label{stable1}
	\end{equation}
for $x,\,q\geqslant 0\/$, where $E_{\alpha, \beta }(z) = \sum_{k\geqslant 0} \frac{z^k}{\Gamma(\beta+\alpha k)}\/$ is the two--parameter Mittag--Leffler function. It is clear that here $\Phi(q)=(q+c^{\alpha})^{\frac{1}{\alpha}}-c\/$.

The expressions for $\kappa(q,x)\/$ and $\varphi(q,x,m)\/$ then reduce to 
	\begin{eqnarray}
&& \varphi(q,x,m) = \frac{e^{-c\,m}}{x^{1+\alpha}\,\Gamma(-\alpha)} \int_0^\infty \int_0^\infty 
		(u+m)\,e^{[(q+c^{\alpha})^{1/\alpha}-c]\,(x-v)}\, \frac{e^{-c\,(u+v)}}
		{(u+v+m)^{\alpha}} \nonumber\\
&& \qquad \times \, \Big[e^{-(q+c^{\alpha})^{1/\alpha}\,x} \,x^{\alpha-1} 
		\,E_{\alpha, \alpha}\big[(q+c^{\alpha})\, x^{\alpha}\big]\Big]\, du\,dv\nonumber\\
&& \qquad - \frac{e^{-c\,m}}{x^{1+\alpha}\, \Gamma(-\alpha)} \int_0^\infty \int_0^x
		(u+m)\, e^{[(q+c^{\alpha})^{1/\alpha}-c]\,(x-v)}\, \frac{e^{-c\,(u+v)}}
		{(u+v+m)^{\alpha}} \nonumber\\ 
&&\qquad \times \, \Big[e^{-(q+c^{\alpha})^{1/\alpha}\,(x-v)}\,(x-v)^{\alpha-1}\,
		E_{\alpha, \alpha}\big[(q+c^{\alpha})\, (x-v)^{\alpha}\big]\Big]\,du\,dv\,, 
		\label{a2}
	\end{eqnarray}
and 
	\begin{eqnarray}
&&\kappa(q,x) = \frac{1}{x^{1+\alpha}\,\Gamma(-\alpha)} \int_0^\infty \int_0^\infty 
		e^{[(q+c^{\alpha})^{1/\alpha}-c]\,(x-v)}\, \frac{e^{-c\,(u+v)}}{(u+v)^{\alpha}}
		\nonumber\\ 
&&\quad \times \, \Big[e^{-(q+c^{\alpha})^{1/\alpha}\,x}\,x^{\alpha-1} 
		\,E_{\alpha, \alpha}\big[(q+c^{\alpha})\, x^{\alpha}\big]\Big]\,du\,dv\nonumber\\
&&\quad - \frac{1}{x^{1+\alpha}\, \Gamma (-\alpha)}\int_0^\infty \int_0^x 
		e^{[(q+c^{\alpha})^{1/\alpha}-c]\,(x-v)} \,\frac{e^{-c\,(u+v)}}{(u+v)^{\alpha}}
		\nonumber\\ 
&&\quad \times \Big[e^{-(q+c^{\alpha})^{1/\alpha}\,(x-v)}\,(x-v)^{\alpha-1}\,E_{\alpha, \alpha}
		\big[(q+c^{\alpha})\,(x-v)^{\alpha}\big]\Big]\,du\,dv\,.
	\label{k2}
	\end{eqnarray}
As in the previous model, the distribution $G(\cdot)\/$ reduces to a Dirac measure at $0\/$. Hence, the expression for $I_m\/$ in \eqref{I1+I2} here is equal to 
	$$I_m = \int_{u >m} e^{\Phi(q)\,u}\,u\,H(du) 
		= \frac{e^{-c\,m}}{\Gamma(-\alpha)} \,\int_0^\infty \int_0^\infty 
		\frac{e^{-c\,(u+m)}\,e^{-(q+c^{\alpha})^{1/\alpha}\,y}}{(u+y+m)^{1+\alpha}}\,dy\,du\,,$$
and $I\/$ in \eqref{egal} reduces to here
	$$I = \int_{u >0} e^{\Phi(q)\,u} \,H(du) = 1-\frac{q}{\Phi(q)\,c}
		= 1-\frac{q}{c\,[(q+c^{\alpha})^{1/\alpha}-c]}\,.$$
Using Theorem \ref{CI}, we can give explicit expressions for the RCI reinsurance premiums.  The extreme--loss RCI premium in \eqref{V2} reduces to
	\begin{eqnarray}
\Pi_1(q,x,m) &=& \varphi(q,x,m) + \kappa(q,x)\,\Big[\frac{c\,e^{-c\,m}[(q+c^{\alpha})^{1/\alpha}-c]}
		{q\,\Gamma(-\alpha)} \Big] \nonumber\\
	&&  \times\, \Big[\int_0^\infty \int_0^\infty \frac{(u+m)\,e^{-c\,u}\,
		e^{-(q+c^{\alpha})^{1/\alpha}\,y}}{(u+y+m)^{1+\alpha}}\,dy\,du\Big]\,.
%+\sum_{n=1}^{\infty} \int_{(x,\infty)}\int_{(0,\infty)}e^{\Phi(q)(u+v)} u \nonumber \\ d(H^{(q)}\otimes G^{(q)})(u)d(H^{(q)\otimes(n-1)}\otimes G^{(q)\otimes(n-1)}\otimes T^{(q)}_x)(v)\rho^n;
	\end{eqnarray}
Similarly, the proportional RCI premium in \eqref{V} reduces to
	\begin{eqnarray}
\Pi_2(q,x,a) &=& a\,\Pi_1(q,x,0)\nonumber\\
	&=& a\,\varphi(q,x,0) + \Big[\frac{a\, c[(q+c^{\alpha})^{1/\alpha}-c]}{q\,\Gamma(-\alpha)}\Big]
		\nonumber\\
	&&\times\, \Big[\int_0^\infty \int_0^\infty \frac{u\,e^{-c\,u}\,
		e^{-(q+c^{\alpha})^{1/\alpha})\,y}}{(u+y)^{1+\alpha}}\,dy\,du\Big]\,\kappa(q,x)
		\label{V1}\,.
%+\sum_{n=1}^{\infty} \int_{(x,\infty)}\int_{(0,\infty)}e^{\Phi(q)(u+v)} u \nonumber \\ d(H^{(q)}\otimes G^{(q)})(u)d(H^{(q)\otimes(n-1)}\otimes G^{(q)\otimes(n-1)}\otimes T^{(q)}_x)(v)\rho^n;
\end{eqnarray}
%\begin{equation}\label{k3}
 %\delta(q,m)=\frac{\lambda \Phi(q) e^{(\Phi(q)-\mu)m} }{\mu(\mu+\Phi(q))}\;.
%\end{equation}
where $\varphi(q,x,m)\/$ and $\kappa(q,x)\/$ are given respectively by \eqref{a2} and \eqref{k2}, which are sufficiently explicit to evaluate numerically with such programs as \emph{Maple\/}, \emph{Matlab\/} or \emph{Mathematica\/}. 

\subsection{Numerical examples}
\label{sec: examples}

To show the tractability of the RCI premium formulas derived above, in the two previous sections, we include here a few numerical illustrations. The premium values were obtained in Python; the code is available to interested readers upon request.

For these numerical illustrations the Poisson parameter of the classical risk model in \eqref{classrisk} was set to $\lambda=1\/$ as well as the mean exponential claim size $\mu=1\/$. Figures \ref{fig:subfigures}--\ref{fig:subfigures2} show the effect on the extreme--loss RCI premiums $\Pi_1(q,x,m)\/$, in \eqref{V0}, of varying the remaining decision variables, such as the RCI retention level $m\/$, the discount factor $q\/$, the premium safety loading $\theta\/$ and the initial surplus $x\/$. 

\begin{figure}[ht!]
     \begin{center}
        \subfigure[$\Pi_1(q,x,m)\/$ curves by discount factor $q\/$]{%
            \label{fig:first}
            \includegraphics[width=0.5\textwidth]{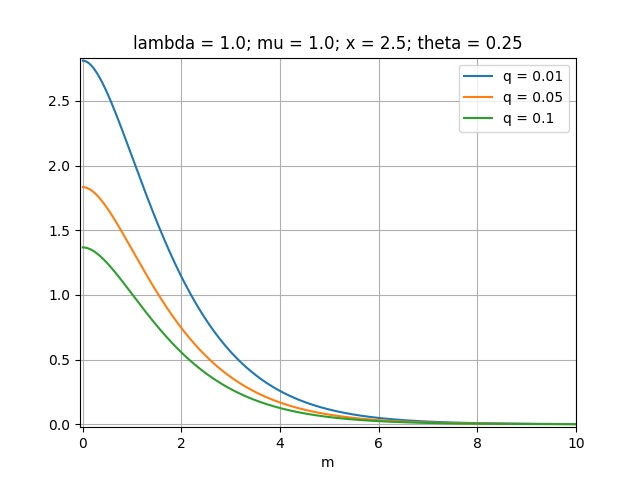}
        }%
        \subfigure[$\Pi_1(q,x,m)\/$ curves by loading $\theta\/$]{%
           \label{fig:second}
           \includegraphics[width=0.5\textwidth]{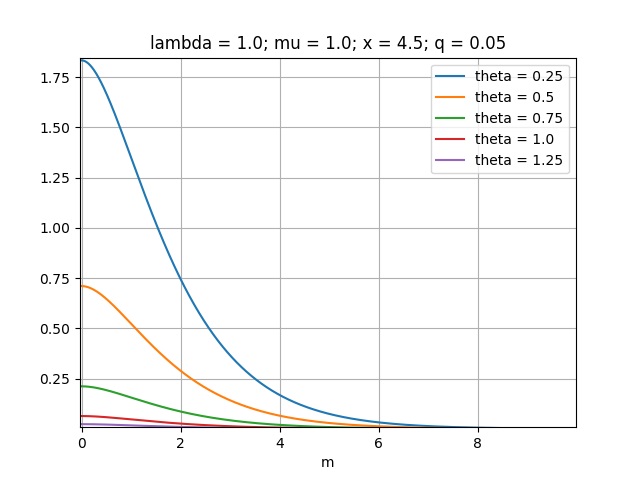}
        }\\ \vspace{-0.1in} %  ------- End of the first row ----------------------%
        \subfigure[$\Pi_1(q,x,m)\/$ curves by initial surplus $x\/$]{%
            \label{fig:third}
            \includegraphics[width=0.5\textwidth]{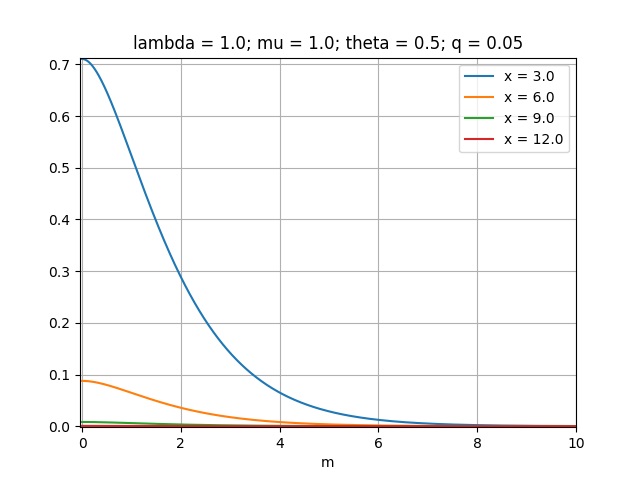}
        }%
    \end{center}
\vspace{-0.2in}
\caption{Extreme--loss RCI premiums $\Pi_1(q,x,m)\/$ versus retention level $m\/$}    
   \label{fig:subfigures}
\end{figure}
%\vspace{0.1in}
\noindent
First, Figure \ref{fig:first} plots curves of $\Pi_1(q,x,m)\/$ versus varying values of the RCI retention level $m\/$, for different discount factors $q\/$, with an initial surplus of $x=2.5\/$ and safety loading factor of $\theta=0.25\/$. Then Figure \ref{fig:second} studies the effect of the loading factor $\theta\/$ on $\Pi_1(q,x,m)\/$ for an initial surplus of $x=4.5\/$ and discount factor $q=0.05\/$, again versus  varying values of the RCI retention level $m\/$. Similarly, Figure \ref{fig:third} shows curves against the retention level $m\/$ for different initial surplus values $x\/$, with a loading of $\theta=0.5\/$ and discount factor $q=0.05\/$.

Figure \ref{fig:fourth} gives RCI premiums as a function of the security loading $\theta\/$, for different values of the retention levels $m\/$, with initial surplus $x=4.0\,c\/$ and discount factor $q=0.05\/$, while Figure \ref{fig:fifth} gives RCI premium curves, also as a function of $\theta\/$, for different $q\/$ values and retention level $m=1\/$. Finally Figure \ref{fig:sixth} gives RCI premium curves, as a function of $\theta\/$, for different initial surplus values $x\/$.

\begin{figure}[ht!]
     \begin{center}
        \subfigure[$\Pi_1(q,x,m)\/$ curves by retention level $m\/$]{%
            \label{fig:fourth}
            \includegraphics[width=0.5\textwidth]{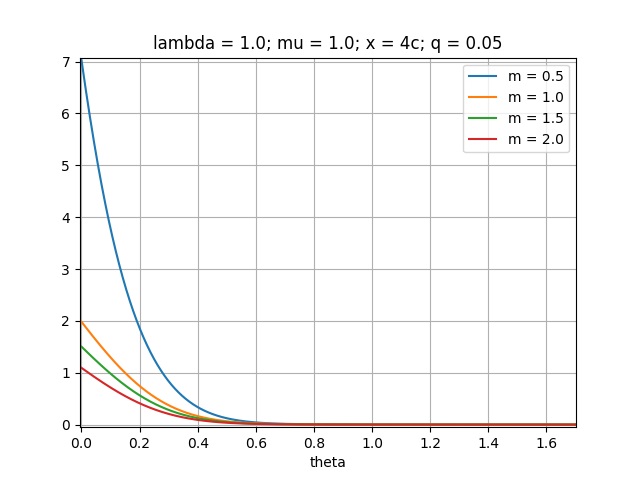}
        }%
        \subfigure[$\Pi_1(q,x,m)\/$ curves by discounting factor $q\/$]{%
            \label{fig:fifth}
            \includegraphics[width=0.5\textwidth]{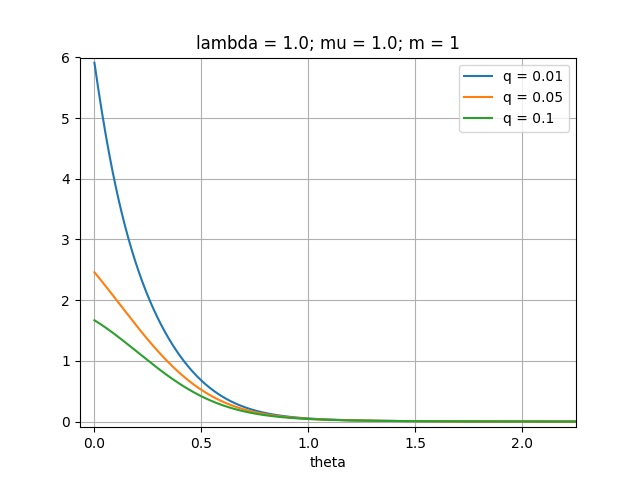}
        }\\ \vspace{-0.1in} %  ------- End of the first row ----------------------%
        \subfigure[$\Pi_1(q,x,m)\/$ curves by initial surplus $x\/$]{%
            \label{fig:sixth}
            \includegraphics[width=0.5\textwidth]{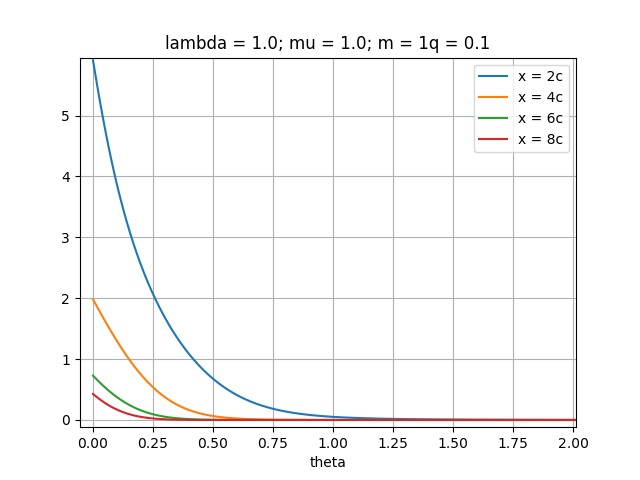}
        }%
    \end{center}
\vspace{-0.2in}
\caption{Extreme--loss RCI premiums $\Pi_1(q,x,m)\/$ versus loading factor $\theta\/$}    
   \label{fig:subfigures2}
\end{figure}
\noindent
From the curves of $\Pi_1(q,x,m)\/$ in Figures  \ref{fig:subfigures} and  \ref{fig:subfigures2} we see that:
	\begin{itemize}
\item Doubling the safety loading $\theta\/$ in Figure \ref{fig:first} or the initial surplus $x\/$ in Figure \ref{fig:second} has a greater effect on RCI premiums than increasing the discounting rate $q\/$ by a factor of 10 in Figure \ref{fig:third}.

\item The same is seen from the curves of $\Pi_1(q,x,m)\/$ versus the safety loading $\theta\/$ in Figure \ref{fig:subfigures2}, the impact of the discounting factor $q\/$ is smaller than that of varying the retention level $m\/$ or that of the initial surplus $x\/$. 

\item However, an analysis without discounting ($q=0\/$) is not possible for small values of the RCI retention level $m\/$ or safety loading $\theta\/$; the present value increases without bound with the more frequent ruin events.
	\end{itemize}
    
\section*{Conclusion}

We consider here a new type of reinsurance contract (RCI) that provides capital injections only in extreme, worse scenario cases, based on the insurer's financial position. Ruin serves as a simplifying proxy for the insurer's financial health. Reinsurance capital injections made after each ruin event allow the insurance company to continue operate indefinitely, as in an on-going concern basis over an infinite horizon.

Borrowing from recent developments in the actuarial and financial literature on models for capital injections (e.g.~Einsenberg and Schmidli, 2011) and the formulas for the expected present value of future capital injections in a quite general class of risk models (Ben Salah, 2014) we develop fair lump sum net premiums for two types of RCI contracts. In this first study we show that tractable formulas can be derived for RCI premiums so that both, insurance and reinsurance companies, can compare the cost of RCI contracts to their alternative risk mitigation strategies/products. We also show that an analysis with discounting leads to unstable numerical calculations when ruin events become more frequent, as for example when small RCI retention levels or small safety loadings $\theta\/$ are chosen. 

Further research should tackle the practical problems of premium allocation to finite (e.g.~one year) contract terms, defining reserves and designing other types of RCI contracts, apart from the proportional and extreme--loss agreements studied here. The problem of optimal control of these new RCI contracts is also a natural question to be studied if these treaties turn out to be useful and viable; in particular 
the optimal stochastic control strategies to choose the appropriate reinsurance retention 
levels and the capital raised from stockholders to temporarily cover the Brownian 
oscillations. The latter should be easier to control as these will occur continuously around 
the ruin barrier.

\section*{Acknowledgments}

The authors are sincerely grateful to Prof. Christian Hipp for his suggestions on an earlier version of this paper and to the two anonymous reviewers for their constructive comments. This research was partially completed during a sabbatical visit of the second author to the University Carlos III of Madrid, in Spain.

{\small

}

\end{document}